\DeclareMathOperator*{\argmax}{arg\,max}
\def\BibTeX{{\rm B\kern-.05em{\sc i\kern-.025em b}\kern-.08em
    T\kern-.1667em\lower.7ex\hbox{E}\kern-.125emX}}
\long\def\comment#1{}
\newfont{\bbb}{msbm10 scaled 700}
\newfont{\bb}{msbm10 scaled 1100}
\newcommand{\CC}{\mbox{\bb C}}
\newcommand{\EE}{\mbox{\bb E}}
\newcommand{\av}{{\bf a}}
\newcommand{\dv}{{\bf d}}
\newcommand{\hv}{{\bf h}}
\newcommand{\pv}{{\bf p}}
\newcommand{\sv}{{\bf s}}
\newcommand{\uv}{{\bf u}}
\newcommand{\xv}{{\bf x}}
\newcommand{\yv}{{\bf y}}
\newcommand{\zv}{{\bf z}}
\newcommand{\Cm}{{\bf C}}
\newcommand{\Fm}{{\bf F}}
\newcommand{\Id}{{\bf I}}
\newcommand{\Um}{{\bf U}}
\newcommand{\Xm}{{\bf X}}
\newcommand{\Ac}{{\cal A}}
\newcommand{\Cc}{{\cal C}}
\newcommand{\Dc}{{\cal D}}
\newcommand{\Nc}{{\cal N}}
\newcommand{\Rc}{{\cal R}}
\newcommand{\eqdef}{\stackrel{\Delta}{=}}
\newcommand{\herm}{{\sf H}}
\newcommand{\transp}{{\sf T}}
\newtheorem{lemma}{Lemma}
\begin{document}
\title{CSIT-Free Downlink Transmission for mmWave MU-MISO Systems in High-Mobility Scenario}

\author{\IEEEauthorblockN{${\textrm{Jeongjae Lee}}$, ${\textrm{Wonseok Choi}}$, and ${\textrm{Songnam Hong}}$
}
\IEEEauthorblockA{
${\textrm{Department of Electronic Engineering, Hanyang University, Seoul, South Korea}}$
\\
Email: 
${\textrm{lyjcje7466@hanyang.ac.kr}}$,
${\textrm{ryan4975@hanyang.ac.kr}}$,
${\textrm{snhong@hanyang.ac.kr}}$
}
}



\maketitle

\begin{abstract}
This paper investigates the downlink (DL) transmission in millimeter-wave (mmWave) multi-user multiple-input single-output (MU-MISO) systems especially focusing on a high speed mobile scenario. To complete the DL transmission within an extremely short channel coherence time, we propose a novel DL transmission framework that eliminates the need for channel state information at the transmitter (CSIT), of which acquisition process requires a substantial overhead, instead fully exploiting the given channel coherence time. Harnessing the characteristic of mmWave channel and uniquely designed CSIT-free unitary precoding, we propose a symbol detection method along with the simultaneous CSI at the receiver (CSIR) and Doppler shift estimation method to completely cancel the interferences while achieving a full combining gain. Via simulations, we demonstrate the effectiveness of the proposed method comparing with the existing baselines.
\end{abstract}

\begin{IEEEkeywords}
CSIT-free downlink transmission, low-latency communication, MIMO, mmWave communication, interference management.
\end{IEEEkeywords}

\section{Introduction}
\label{sec:intro}

Millimeter-wave (mmWave) multi-user multiple-input single-output (MU-MISO) systems are a key enabler for achieving  high spectral efficiency. However, in high-mobility scenarios, severe Doppler effects reduce the channel coherence time, making conventional downlink (DL) frameworks that relay on channel estimate information at the transmitter (CSIT) impractical. Acquiring CSIT via uplink pilots or feedback introduces substantial training overhead and latency, fundamentally limiting the applicability of CSIT-based precoding for ultra-reliable low-latency communication (URLLC).


A promising alternative is CSIT-free transmission, which shifts the requirement for channel knowledge to the receiver. Existing approaches, such as space–time coding (STC) \cite{Alamouti1998,Tarokh1999}, offer robustness without CSIT but are largely limited to single-user scenarios and cannot guarantee interference-free multi-user transmission. Moreover, conventional interference-mitigation strategies either rely on downlink channel reconstruction \cite{Kim2024} or assume multi-antenna receivers \cite{Arti2020}, rendering them unsuitable for lightweight devices (e.g., Internet of Things (IoT) devices) in high-mobility networks.

In this paper, we propose a novel CSIT-free DL transmission framework for mmWave MU-MISO systems, particularly focusing on high-mobility scenarios. The key idea is to exploit the orthogonality of circulant permutation discrete Fourier transform (CP-DFT) matrices to design unitary precoding across sequential OFDM data blocks. This approach enables (i) complete elimination of inter-user interference and (ii) full combining gain at each receiver, without requiring CSIT. Furthermore, we develop a user-side symbol detection, and joint channel and Doppler estimation method that leverage only two pilot symbols, ensuring robustness under rapid channel variations. Simulation results under realistic mmWave settings demonstrate that the proposed scheme significantly outperforms both conventional CSIT-based and CSIT-free baselines, especially when the channel coherence time is short. To the best of our knowledge, this is the first framework to achieve interference-free, Doppler-resilient, and CSIT-free multi-user downlink transmission in mmWave systems.

{\em Notations.} Let $[N]\eqdef \{1,2,...,N\}$ for any positive integer $N$. For an $M\times N$ matrix $\Xm$, let $\Xm(i,:)$ and $\Xm(:,j)$ denote its $i$-th row and $j$-th column, respectively. Also, $\Xm^{*}$ and $\Xm^{\herm}$ represent the complex conjugate and Hermitian transpose of $\Xm$. For an $M$-dimensional column vector $\xv$, $\xv([m])$ denotes the sub-vector containing its first $m\in[M]$ components, and $\mbox{diag}(\xv)$ is an $M\times M$ diagonal matrix with the elements of $\xv$ on the diagonal. $\Id_M$ and ${\bf 1}_M$ denote the $M\times M$ identity matrix and the all-one $M$-dimensional column vector.

\section{System Model}
\label{sec:systemmodel}
We consider a DL transmission for mmWave MU-MISO system, where a base station (BS) equipped with $N$ antennas serves $K$ single-antenna users. Each user moves at a high constant velocity $v_k$ with a heading angle $\phi_k\in[0,2\pi)$ for $k\in[K]$ in a single-cell, high-mobility scenario. The BS is located at the center of the cell and employs orthogonal frequency division multiplexing (OFDM). Let $L$ and $L_{\rm CP}$ denote the number of sub-carriers and the length of the cyclic prefix, respectively, and let $B$ represent the system bandwidth. In the baseband (i.e., discrete frequency domain), the channel response from the BS to user $k$ on a sub-carrier $\ell$ follows a Rician fading channel model: 
\begin{equation}
    \sqrt{\frac{\kappa}{\kappa+1}}\hv_{k,\ell}^{\rm LoS} + \sqrt{\frac{1}{\kappa+1}}\hv_{k,\ell}^{\rm NLoS}\stackrel{(a)}{\approx}\hv_{k,\ell}^{\rm LoS}\in\CC^{N},\label{eq:channelmodel}
\end{equation} where $\kappa>0$ is the Rician $\kappa$-factor, and the line-of-sight (LoS) channel response is given by $\hv_{k,\ell}^{\rm LoS} = g_{k,\ell}\av_\ell(\theta_k)\in\CC^{N}$. Herein, the complex channel gain is normalized as $g_{k,\ell} = e^{j\vartheta_{k,\ell}}/r_k\in\CC$, where $\vartheta_{k,\ell}\in[0,2\pi)$ represents a small-scale fading and $r_k$ is the physical distance between the BS and user $k$. Each component of Non-LoS (NLoS) channel response vector $\hv_{k,\ell}^{\rm NLoS}\in\CC^{N}$ follows the distribution $\Cc\Nc(0,1/r_k)$ and the angle of departure (AoD) from the BS to user $k$ is denoted by $\theta_k\in[0,2\pi)$. The normalized array response vector for AoD $\theta$ is defined as
\begin{equation*}
    \av_{\ell}(\theta) = \left[1,e^{-j\nu_{\ell}d_{\rm c}\sin{\theta}},\dots,e^{-j\nu_{\ell}d_{\rm c}(N-1)\sin{\theta}}\right]^{\transp}\in\CC^{N},
\end{equation*}
where $\nu_\ell$ is the wavenumber of sub-carrier $\ell$. Remarkably, \cite{Priebe2013,Wang2025} showed that in mmWave systems, the approximation in (a) in \eqref{eq:channelmodel} is reasonable, as significant reflection losses cause the NLoS components to be attenuated by more than $10$ dB compared to the LoS component. In this paper, therefore, we assume $\kappa\gg 1$ for the design of the transmission scheme, while the effect of NLoS paths (i.e., Rician $\kappa$-factor) is further evaluated through simulations.

\begin{figure}[t]
\centering
\includegraphics[width=0.95\linewidth]{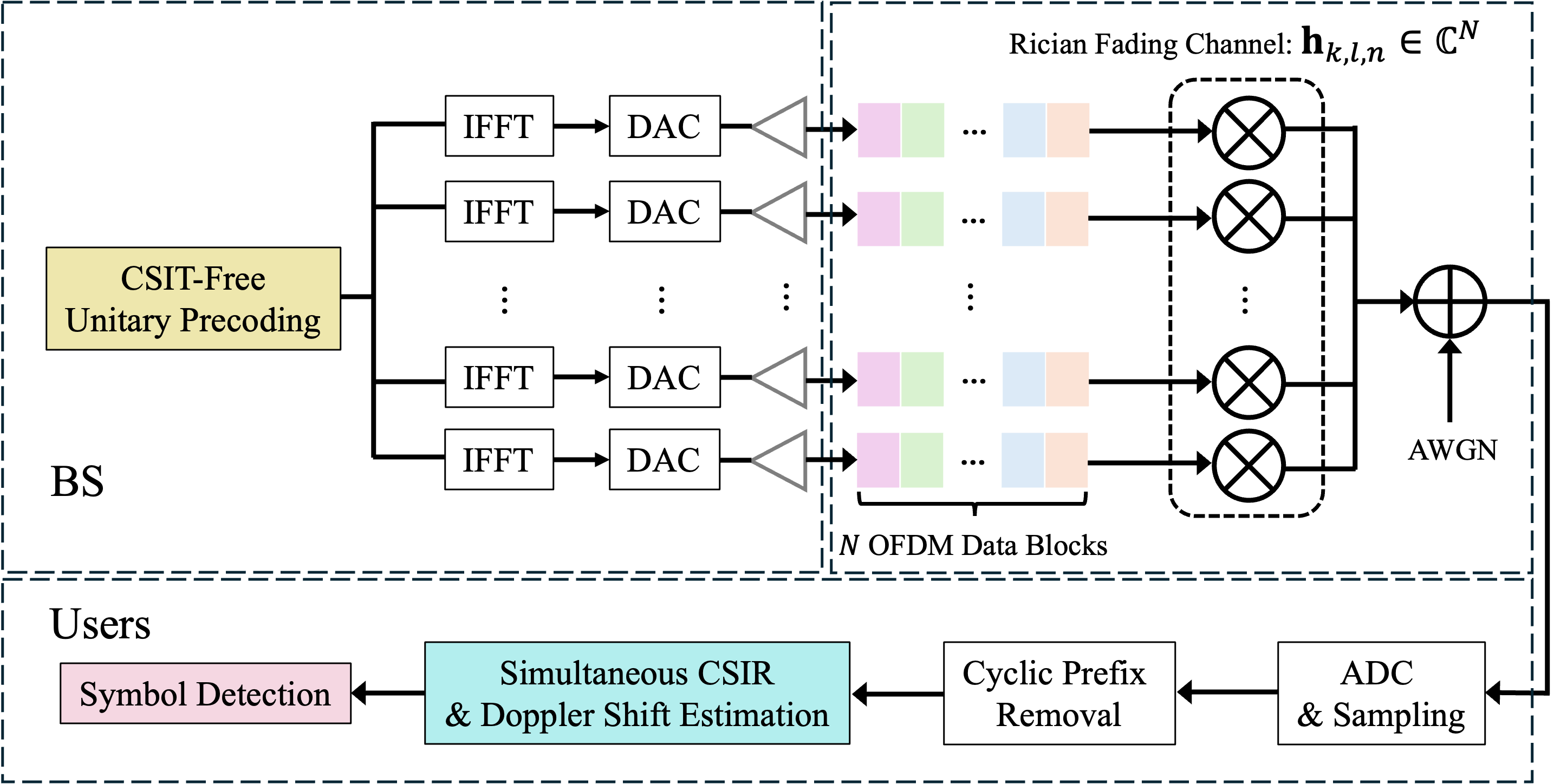}
\caption{Illustration of the proposed CSIT-free downlink transmission framework for MU-MISO mmWave systems.}
\end{figure}

\section{CSIT-Free DL Transmission Framework}\label{sec:framework}

In the high-mobility system model, the channel coherence time, denoted by $T_{\rm c}$, is extremely short due to large Doppler shifts \cite{Liu2021}. Under such conditions, the CSIT-based DL transmission becomes impractical, as the acquiring CSIT—via uplink pilots or feedback—incurs a substantial training overhead. To address this, we propose a DL transmission framework that eliminates the need for CSIT acquisition. Fig. 2 shows the proposed DL transmission framework, where the BS transmits $N$ sequential OFDM data blocks. Each block has a symbol duration of $T_{\rm b} = (L+L_{\rm CP})/B$ and the transmission satisfies the latency constraint $NT_{\rm b}\leq T_{\rm c}$.
The detailed operation of the proposed framework is presented in the subsequent subsections.

\subsection{CSIT-Free Unitary Precoding}
\label{sec:precoding}

We propose a deterministic (CSIT-free) unitary precoding scheme to ensure robustness against noise without increasing the transmit power, while simultaneously guaranteeing interference management so that each user can correctly detect the desired symbol. To achieve both objectives in a CSIT-free system, we exploit a key property of the CP-DFT matrices, as stated in the following lemma:

\begin{lemma}
    Let $\Um$ be an $N\times N$ normalized DFT matrix. The $n$-th CP-DFT matrix is defined as: 
    \begin{equation*}
         \Um_{n} \eqdef \left[
            \Um(:,\Cm(1,n)) \;\cdots\; \Um(:,\Cm(N,n))
        \right]\in\CC^{N\times N},
    \end{equation*} where $\Cm$ is the $N\times N$ circulant matrix with the first column $\begin{bmatrix}
            1 & 2 &\cdots & N 
       \end{bmatrix}^{\transp}$.
    Then, it holds that $\Um_n\Um_{n'}^{\herm}=\Id_N$ for $n=n'$, and $\Um_n\Um_{n'}^{\herm}=\mbox{\em diag}(\uv_{[n,n']})$ for $n\neq n'$, where $\uv_{n,n'}$ is a vector satisfying ${\bf 1}_N^{\herm}\uv_{n,n'} = 0$.
\end{lemma}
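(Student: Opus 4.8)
The plan is to reduce the whole statement to the orthogonality of the DFT characters by first writing the entries of $\Um_n$ in closed form. Writing the normalized DFT matrix as $\Um(a,b)=\tfrac{1}{\sqrt N}\,\omega^{(a-1)(b-1)}$ with $\omega = e^{-j2\pi/N}$, I would first unpack the circulant index map. Since the first column of $\Cm$ is $[1,\dots,N]^{\transp}$ and each subsequent column is a cyclic downshift, one has $\Cm(i,n) = ((i-n)\bmod N)+1$, so the $i$-th column of $\Um_n$ is $\Um(:,((i-n)\bmod N)+1)$. The key simplifying observation is that the modular reduction is harmless at the level of the entries: because $\omega^{N}=1$, we obtain $\Um_n(a,i)=\tfrac{1}{\sqrt N}\,\omega^{(a-1)(i-n)}$ with the exponent read as an ordinary integer. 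This converts the column permutation into a clean per-column phase.

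With this closed form, I would compute the $(a,b)$ entry of the product directly, pulling out all factors not depending on $i$: $(\Um_n\Um_{n'}^{\herm})(a,b)=\sum_{i=1}^{N}\Um_n(a,i)\,\overline{\Um_{n'}(b,i)} = \tfrac{1}{N}\,\omega^{-(a-1)n+(b-1)n'}\sum_{i=1}^{N}\omega^{(a-b)i}$. The inner sum is the standard root-of-unity sum: it equals $N$ when $a=b$ and vanishes when $a\neq b$, since then $\omega^{a-b}\neq 1$ (because $|a-b|<N$ for $a,b\in[N]$) and the finite geometric series collapses. Hence the product is diagonal, and its $(a,a)$ entry reduces to $\omega^{(a-1)(n'-n)}$.

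The two claimed cases then follow by inspection. For $n=n'$ every diagonal entry is $\omega^{0}=1$, giving $\Um_n\Um_{n'}^{\herm}=\Id_N$. For $n\neq n'$ the product equals $\diag(\uv_{n,n'})$ with $\uv_{n,n'}(a)=\omega^{(a-1)(n'-n)}$. To verify $\onev_N^{\herm}\uv_{n,n'}=0$ I would invoke the same identity once more: $\sum_{a=1}^{N}\omega^{(a-1)(n'-n)}$ is a geometric series with ratio $\omega^{n'-n}$, and since $n,n'\in[N]$ with $n\neq n'$ we have $n'-n\not\equiv 0\pmod N$, so $\omega^{n'-n}\neq 1$ and the sum vanishes.

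I expect no genuine difficulty here; the entire lemma is the orthogonality of DFT exponentials applied twice. The only place demanding care is the index bookkeeping in the first step—pinning down the circulant map $\Cm(i,n)$ and justifying that the $\bmod N$ may be dropped inside the power of $\omega$. Once the entries are in the form $\tfrac{1}{\sqrt N}\,\omega^{(a-1)(i-n)}$, both the diagonality and the zero-sum condition are immediate consequences of the single fact that $\sum_{i=1}^{N}\omega^{ci}$ equals $N$ when $c\equiv 0\pmod N$ and is zero otherwise.
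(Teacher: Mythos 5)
Your proof is correct, and there is nothing in the paper to compare it against line by line: the paper does not prove this lemma at all, but simply defers to \cite[Lemma 1]{Lee2025Blind}. Your argument therefore serves as a self-contained, elementary replacement. Each step checks out: the circulant index map $\Cm(i,n)=((i-n)\bmod N)+1$ is the standard downshift convention for a circulant with first column $[1,\dots,N]^{\transp}$; dropping the $\bmod N$ inside the exponent is legitimate because $\omega^{N}=1$; the entry formula $(\Um_n\Um_{n'}^{\herm})(a,b)=\tfrac{1}{N}\,\omega^{-(a-1)n+(b-1)n'}\sum_{i=1}^{N}\omega^{(a-b)i}$ is right; and both the diagonality and the zero-sum property $\onev_N^{\herm}\uv_{n,n'}=0$ reduce to the root-of-unity sum, each time using that the relevant exponent ($a-b$, respectively $n'-n$) is nonzero modulo $N$ because both indices lie in $[N]$. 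Your closed form $\uv_{n,n'}(a)=\omega^{(a-1)(n'-n)}$ is in fact slightly stronger than what the lemma asserts, which only claims existence of a zero-sum diagonal vector. One cosmetic note: if one instead adopted the upshift circulant convention $\Cm(i,n)=((i+n-2)\bmod N)+1$, your computation goes through verbatim with the diagonal entries becoming $\omega^{(a-1)(n-n')}$, so the conclusion is insensitive to this bookkeeping choice and your one flagged point of care (pinning down $\Cm(i,n)$) is not actually a point of vulnerability.
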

\begin{proof}
    The proof is given in {\cite[Lemma 1]{Lee2025Blind}}.
\end{proof} 

Lemma 1 establishes the orthogonality between different CP-DFT matrices. By harnessing this property, the BS precodes the symbols of a sub-carrier $\ell$ in data block $n\in[N]$ as follows:
\begin{equation}
    \xv_{\ell,n} = \frac{1}{\sqrt{N}}\Fm_n\mbox{diag}(\pv_\ell)\sv_{\ell}\in\CC^{N},
\end{equation} where the precoding matrix is defined as: 
\begin{equation}
    \Fm_{n} = \left[
       \Um_1(:,n) \;   \Um_2(:,n) \; \cdots \; \Um_N(:,n)\right]\in\CC^{N\times N},
\end{equation} and
$\pv_{\ell}=[\sqrt{p_1},\sqrt{p_2},\dots,\sqrt{p_N}]\in\CC^{N}$ denotes the power allocation vector. The vector $\sv_{\ell}\in\CC^{N}$ contains $K$ information symbols for all users in $\sv_{\ell}([K])$, and two pilot symbols in $\sv_{\ell}(N-1)$ and  $\sv_{\ell}(N)$, such that $\EE[\sv_\ell\sv_\ell^{\herm}]=\Id_N$. The two pilot symbols are required for CSI at the receiver (CSIR) and for Doppler shift estimation at each user. Consequently, the BS can serve $K=N-2$ users under the proposed data transmission framework. During $N$ data blocks, symbols are precoded using different CP-DFT matrices, which guarantee orthogonality among the transmitted symbols, thereby eliminating inter-user interference.

\subsection{Symbol Detection}
After cyclic prefix removal is performed using the method in \cite{Wang2018} for the quantized and sampled signals in data block $n$, the $n$-th received signal at user $k$ on sub-carrier $\ell$ can be expressed as $y_{k,\ell,n} = \hv_{k,\ell,n}^{\transp}\xv_{\ell,n} + z_{k,\ell,n}\in\CC$, where the DL channel response at data block $n$, accounting for the Doppler shift across blocks, is given by:
\begin{equation}
    \hv_{k,\ell,n} = e^{j\nu_{\ell}(n-1)T_{\rm b}v_k\cos(\phi_k-\theta_k)}\hv_{k,\ell}^{\rm LoS},
\end{equation} and $z_{k,\ell,n}\sim \Cc\Nc(0,\sigma^2)$ denotes the additive white Gaussian noise (AWGN) with power $\sigma^2$. Concatenating the received signals over $N$ data blocks, user $k$ forms the observation vector $ \yv_{k,\ell} = [y_{k,\ell,1},y_{k,\ell,2},\dots,y_{k,\ell,N}]^{\transp}\in\CC^{N}$.

Let $\hat{g}_{k,\ell}\in\CC$ and $\hat{\theta}_k\in[0,2\pi)$ denote the estimated complex channel gain and AoD, respectively. User $k$ then detects the symbol $s_{\ell}(n)$ by linearly combining the observation vector with the corresponding CP-DFT matrix $\Um_n$ as follows:
\begin{align}
    c_{k,\ell,n}(\hat{g}_{k,\ell},\hat{\theta}_k) &= \frac{\av_{\ell}^{\herm}(\hat{\theta}_k)\Um_n^{*}\mbox{diag}(\dv_{\ell}^{*}(\hat{\theta}_k))\yv_{k,\ell}}{\hat{g}_{k,\ell}}\in\CC,\label{eq:classification}
\end{align} where $\dv_\ell(\theta) = \begin{bmatrix}
        1,\dots,e^{j\nu_{\ell}(N-1)T_{\rm b}v_k\cos(\phi_k-\theta)}
    \end{bmatrix}^{\transp}\in\CC^{N}$ denotes the Doppler shift vector. The expression in \eqref{eq:classification} can be decomposed into three additive terms:
\begin{align}
    &c_{k,\ell,n}(\hat{g}_{k,\ell},\hat{\theta}_k) = \overbrace{\frac{{g}_{k,\ell}u_{k,\ell,n}(\hat{\theta}_k)s_{\ell}(n)\sqrt{{p_n}}}{\hat{g}_{k,\ell}\sqrt{N}}}^{\rm desired\; symbol\; term}\nonumber\\ 
    &+ \underbrace{\sum_{n'\neq n}^{N}\frac{{g}_{k,\ell}{v_{k,\ell,n'}(\hat{\theta}_k)s_{\ell}(n')}\sqrt{{p_{n'}}}}{\hat{g}_{k,\ell}\sqrt{N}}}_{\rm interference\;term} + \underbrace{\tilde{z}_{k,\ell,n}(\hat{g}_{k,\ell},\hat{\theta}_k)}_{\rm noise\; term},\label{eq:decoded}
\end{align} where $u_{k,\ell,n}(\hat{\theta}_k)\in\CC$ and $v_{k,\ell,n'}(\hat{\theta}_k)\in\CC$ are defined in \eqref{eq:signal} and \eqref{eq:insignal}, respectively. The noise term is given by:
\begin{equation}
    \tilde{z}_{k,\ell,n}(\hat{g}_{k,\ell},\hat{\theta}_k) =  \frac{\av_{\ell}^{\herm}(\hat{\theta}_k)\Um_n^{*}\mbox{diag}(\dv_{\ell}^{*}(\hat{\theta}_k))\zv_{k,\ell}}{\hat{g}_{k,\ell}}\in\CC,
\end{equation} with $\zv_{k,\ell} = [z_{k,\ell,1},z_{k,\ell,2},\dots,z_{k,\ell,N}]^{\transp}\in\CC^{N}$. Notably, our key result is provided in Lemma~\ref{lem:linear_combiner}:
\begin{lemma}\label{lem:linear_combiner} When $\hat{g}_{k,\ell} = g_{k,\ell}$ and $\hat{\theta}_k = \theta_k$, the combined signal in \eqref{eq:decoded} reduces to 
\begin{equation*}
    c_{k,\ell,n}(\hat{g}_{k,\ell},\hat{\theta}_k) = s_{\ell}(n)\sqrt{Np_{n}} + \tilde{z}_{k,\ell,n}({g}_{k,\ell},{\theta}_k).\label{eq:max}
\end{equation*}
\end{lemma}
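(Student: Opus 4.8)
The plan is to substitute the explicit signal model into the combiner \eqref{eq:classification}, set $\hat{g}_{k,\ell}=g_{k,\ell}$ and $\hat{\theta}_k=\theta_k$, and show that the resulting bilinear form in the CP-DFT matrices collapses to a single scalar via Lemma~1. First I would write the $n'$-th entry of the observation as $y_{k,\ell,n'}=[\dv_\ell(\theta_k)]_{n'}\,g_{k,\ell}\,\av_\ell^{\transp}(\theta_k)\xv_{\ell,n'}+z_{k,\ell,n'}$, using $\hv_{k,\ell,n'}=[\dv_\ell(\theta_k)]_{n'}g_{k,\ell}\av_\ell(\theta_k)$ together with $\xv_{\ell,n'}=\tfrac{1}{\sqrt{N}}\Fm_{n'}\mbox{diag}(\pv_\ell)\sv_\ell$. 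Because each Doppler entry $[\dv_\ell(\theta_k)]_{n'}$ has unit modulus, the compensation $\mbox{diag}(\dv_\ell^{*}(\theta_k))$ multiplies the signal part by $[\dv_\ell^{*}(\theta_k)]_{n'}[\dv_\ell(\theta_k)]_{n'}=1$, so the block-dependent Doppler phase disappears exactly; the channel gain likewise cancels against the prefactor $1/\hat{g}_{k,\ell}=1/g_{k,\ell}$. These two cancellations are the routine bookkeeping part of the argument.

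Next I would expand $\Fm_{n'}=[\Um_1(:,n')\;\cdots\;\Um_N(:,n')]$ so that $\av_\ell^{\transp}(\theta_k)\Fm_{n'}\mbox{diag}(\pv_\ell)\sv_\ell=\sum_{m=1}^{N}\sqrt{p_m}\,s_\ell(m)\,\av_\ell^{\transp}(\theta_k)\Um_m(:,n')$, and then collect the sum over blocks $n'$. Interchanging the order of summation, the coefficient multiplying each symbol $s_\ell(m)$ becomes $\sum_{n'}\overline{\av_\ell^{\transp}(\theta_k)\Um_n(:,n')}\,\av_\ell^{\transp}(\theta_k)\Um_m(:,n')$, which I recognize as the Hermitian inner product of the rows $\av_\ell^{\transp}(\theta_k)\Um_n$ and $\av_\ell^{\transp}(\theta_k)\Um_m$, i.e. $\av_\ell^{\herm}(\theta_k)\Um_n^{*}\Um_m^{\transp}\av_\ell(\theta_k)$. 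The whole problem is now reduced to evaluating this single bilinear form for $m=n$ and for $m\neq n$.

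The decisive step is to rewrite $\Um_n^{*}\Um_m^{\transp}=(\Um_n\Um_m^{\herm})^{*}$ and invoke Lemma~1. For $m=n$ this yields $\Id_N$, so the coefficient is $\av_\ell^{\herm}(\theta_k)\av_\ell(\theta_k)=N$, since every entry of the steering vector has unit modulus. For $m\neq n$ it yields $\mbox{diag}(\uv_{n,m}^{*})$, whence the coefficient equals $\av_\ell^{\herm}(\theta_k)\mbox{diag}(\uv_{n,m}^{*})\av_\ell(\theta_k)=\sum_{i}|[\av_\ell(\theta_k)]_i|^2[\uv_{n,m}^{*}]_i=\onev_N^{\transp}\uv_{n,m}^{*}$, again because $|[\av_\ell(\theta_k)]_i|=1$. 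This is precisely where the constant-modulus property of the array response meets the zero-sum property $\onev_N^{\herm}\uv_{n,m}=0$ from Lemma~1: conjugation preserves the vanishing column sum, so the entire interference coefficient is annihilated. I expect this interplay---turning the off-diagonal matrix $\mbox{diag}(\uv_{n,m}^{*})$ into a plain column sum that Lemma~1 kills---to be the crux of the proof, whereas the Doppler and gain cancellations are mechanical.

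Finally I would collect the one surviving term $m=n$, namely $\tfrac{1}{\sqrt{N}}\sqrt{p_n}\,s_\ell(n)\cdot N=\sqrt{Np_n}\,s_\ell(n)$, and note that the noise contribution is untouched by these manipulations and equals $\tilde{z}_{k,\ell,n}(g_{k,\ell},\theta_k)$ by definition, which gives the claimed identity.
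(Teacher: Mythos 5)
Your proof is correct and follows essentially the same route as the paper: after the Doppler and gain cancellations, you reduce the coefficient of each $s_\ell(m)$ to the bilinear form $\av_\ell^{\herm}(\theta_k)\left(\Um_n\Um_m^{\herm}\right)^{*}\av_\ell(\theta_k)$, evaluate it via Lemma~1, and use the unit-modulus steering-vector entries to convert $\mbox{diag}(\uv_{n,m})^{*}$ into the vanishing column sum $\onev_N^{\transp}\uv_{n,m}^{*}=0$ --- exactly the paper's steps for $u_{k,\ell,n}$ and $v_{k,\ell,n'}$ in \eqref{eq:signal} and \eqref{eq:insignal}. The only difference is that you re-derive the decomposition \eqref{eq:decoded} from the signal model rather than taking it as given, which is additional bookkeeping, not a different argument.
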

\begin{proof} We first show that the combining gain of the desired symbol term equals $N$:
\begin{align}
    u_{k,\ell,n}(\hat{\theta}_k) &= \av_\ell^{\herm}(\hat{\theta}_k)\Um_n^{*}\mbox{diag}(\dv_\ell^{*}(\hat{\theta}_k)\circ\dv_\ell({\theta}_k))\Um_n^{\transp}\av_\ell(\theta_k)\nonumber\\
    &=\av_\ell^{\herm}({\theta}_k)\left(\Um_n\Um_n^{\herm}\right)^{*}\av_\ell(\theta_k)\stackrel{(a)}{=}N,\label{eq:signal}
\end{align} where $\circ$ denotes the Hadamard product, and (a) follows from Lemma 1. 

Next, we show that the interference term is completely canceled:
\begin{align}
    v_{k,\ell,n'}(\hat{\theta}_k) &= \av_\ell^{\herm}(\hat{\theta}_k)\Um_n^{*}\mbox{diag}(\dv_\ell^{*}(\hat{\theta}_k)\circ\dv_\ell({\theta}_k))\Um_{n'}^{\transp}\av_\ell(\theta_k)\nonumber\\
    &=\av_\ell^{\herm}({\theta}_k)\left(\Um_n\Um_{n'}^{\herm}\right)^{*}\av_\ell(\theta_k)\nonumber\\
    &\stackrel{(a)}{=}\av_\ell^{\herm}({\theta}_k)\mbox{diag}(\uv_{n,n'})^{*}\av_\ell(\theta_k)
    \stackrel{(b)}{=}0,\label{eq:insignal}
\end{align} where (a) follows from Lemma 1, and (b) holds because $\av_\ell^{\herm}({\theta}_k)\mbox{diag}(\av_\ell(\theta_k)) = {\bf 1}_N^{\transp}$ and ${\bf 1}_{N}^{\herm}\uv_{n,n'} = ({\bf 1}_{N}^{\transp}\uv_{n,n'}^{*})^{*}=0$ as given by Lemma 1. This completes the proof.
\end{proof} 

From Lemma~2, the symbol can be accurately detected from the combined signal in \eqref{eq:decoded}, provided that the CSIR and Doppler shift are precisely estimated. Therefore, obtaining accurate estimates of these parameters is essential before performing the proposed symbol detection.


\section{CSIR and Doppler Shift Estimation}

Focusing on user $k$, we describe the proposed CSIR and Doppler shift estimation method. The same procedure is subsequently applied to the other users. We first define the signal-to-noise-plus-interference ratio (SINR) from the combined signal in \eqref{eq:decoded} as:
\begin{equation}
     \gamma_{k,\ell,n}(g,\theta) = \frac{|{\rm desired\;symbol\;term}|^2}{|{\rm interference\;term}|^2 + |{\rm noise\; term}|^2},\label{eq:SINR}
\end{equation} where $g\in\CC$ and $\theta\in[0,2\pi)$ denote arbitrary complex channel gain and AoD, respectively. Based on this definition, a key property enabling CSIR and Doppler shift estimation is provided in Lemma~3:
\begin{lemma} 
    For any complex channel gain $g\in\CC$, we let
    \begin{equation*}
        \theta^{\star} = \argmax_{\theta\in[-\frac{\pi}{2},\frac{\pi}{2})}\;\gamma_{k,\ell,n}(g,\theta).
    \end{equation*} Then, it holds that $\av_{\ell}(\theta^{\star}) = \av_{\ell}(\theta_k)$ and $\dv_{\ell}(\theta^{\star}) = \dv_{\ell}(\theta_k)$.
\end{lemma}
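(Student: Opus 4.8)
The plan is to reduce the SINR in \eqref{eq:SINR} to a closed form, show it is globally maximized at the true angle, and then argue that any maximizer must reproduce the true array and Doppler responses. First I would fix $k,\ell,n$ and evaluate the three terms of \eqref{eq:decoded} after averaging over the unit-power data symbols and the AWGN. Introducing the shorthand $\bv_m(\theta)\eqdef\mbox{diag}(\dv_\ell(\theta))\Um_m^{\transp}\av_\ell(\theta)$, comparison with \eqref{eq:signal}--\eqref{eq:insignal} gives $u_{k,\ell,n}(\theta)=\bv_n(\theta)^{\herm}\bv_n(\theta_k)$ and $v_{k,\ell,n'}(\theta)=\bv_n(\theta)^{\herm}\bv_{n'}(\theta_k)$. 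Since $\mbox{diag}(\dv_\ell(\theta))$ and $\Um_n$ are unitary and $\|\av_\ell(\theta)\|^2=N$, the noise power equals $N\sigma^2/|g|^2$ and is therefore independent of $\theta$, while the common factor $|g_{k,\ell}|^2/|g|^2$ appears in both the signal and interference powers, so that
\[
\gamma_{k,\ell,n}(g,\theta)=\frac{|g_{k,\ell}|^2 p_n\,|u_{k,\ell,n}(\theta)|^2}{|g_{k,\ell}|^2\sum_{n'\neq n}p_{n'}|v_{k,\ell,n'}(\theta)|^2+N^2\sigma^2}.
\]
In particular $\gamma$ does not depend on $g$, which is precisely why the claim may be asserted for every $g\in\CC$, and the problem reduces to analysing $|u_{k,\ell,n}(\theta)|^2$ and the interference as functions of $\theta$ alone.

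Next I would establish a power-conservation identity. Using $\dv_\ell^{*}(\theta_k)\circ\dv_\ell(\theta_k)=\onev_N$ and Lemma~1, $\bv_{n'}(\theta_k)^{\herm}\bv_{n''}(\theta_k)=\av_\ell^{\herm}(\theta_k)(\Um_{n'}\Um_{n''}^{\herm})^{*}\av_\ell(\theta_k)=N\delta_{n',n''}$ exactly as in the proof of Lemma~2, so $\{\bv_{n'}(\theta_k)\}_{n'=1}^N$ is an orthogonal basis of $\CC^N$ with squared norm $N$. Because $\|\bv_n(\theta)\|^2=N$ for every $\theta$, expanding $\bv_n(\theta)$ in this basis and applying Parseval yields $|u_{k,\ell,n}(\theta)|^2+\sum_{n'\neq n}|v_{k,\ell,n'}(\theta)|^2=N^2$. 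Combining this with the Cauchy--Schwarz bound $|u_{k,\ell,n}(\theta)|^2=|\bv_n(\theta)^{\herm}\bv_n(\theta_k)|^2\le N^2$ gives
\[
\gamma_{k,\ell,n}(g,\theta)\le\frac{|g_{k,\ell}|^2 p_n N^2}{N^2\sigma^2}=\frac{|g_{k,\ell}|^2 p_n}{\sigma^2},
\]
the right-hand side being the receive $\SNR$, which is attained at $\theta=\theta_k$ since Lemma~2 gives $u=N$ and all $v_{n'}=0$ there. Hence $\theta_k$ is a global maximizer, and any maximizer $\theta^{\star}$ must meet the bound, forcing $|u_{k,\ell,n}(\theta^{\star})|^2=N^2$; by the conservation identity this simultaneously nulls the interference, $v_{k,\ell,n'}(\theta^{\star})=0$ for all $n'\neq n$.

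The final step, and the one I expect to be the main obstacle, is to convert this equality condition into the two stated vector identities. Equality in Cauchy--Schwarz means $\bv_n(\theta^{\star})=c\,\bv_n(\theta_k)$ for some unit-modulus $c$, i.e. $\mbox{diag}(\dv_\ell(\theta^{\star}))\Um_n^{\transp}\av_\ell(\theta^{\star})=c\,\mbox{diag}(\dv_\ell(\theta_k))\Um_n^{\transp}\av_\ell(\theta_k)$. Here I would exploit that $\av_\ell(\theta)$ and $\dv_\ell(\theta)$ are Vandermonde (pure-tone) vectors whose first entries are pinned to $1$: taking componentwise ratios shows that $\Um_n^{\transp}\av_\ell(\theta^{\star})$ equals $\Um_n^{\transp}\av_\ell(\theta_k)$ modulated by the unit-modulus pure tone induced by the Doppler mismatch. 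Matching magnitudes then gives $|[\Um_n^{\transp}\av_\ell(\theta^{\star})]_m|=|[\Um_n^{\transp}\av_\ell(\theta_k)]_m|$ for all $m$; since these entries are Dirichlet kernels sampled on the DFT grid shifted by the spatial frequency $\nu_\ell d_{\rm c}\sin\theta$, I would argue that the (generically non-vanishing) Dirichlet profile together with the injectivity of $\sin$ on $[-\tfrac{\pi}{2},\tfrac{\pi}{2})$ forces $\sin\theta^{\star}=\sin\theta_k$, hence $\av_\ell(\theta^{\star})=\av_\ell(\theta_k)$. Feeding this back, the residual phase ramp must be constant across the nonzero (generically all, and in particular two consecutive) entries, which within the unambiguous Doppler range forces the Doppler tone to vanish, whence $\dv_\ell(\theta^{\star})=\dv_\ell(\theta_k)$ and $c=1$.

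The delicate point is precisely this identifiability argument: the magnitude condition on the DFT grid alone is subject to aliasing, so the phase structure of the alignment together with the pure-tone form of both $\av_\ell$ and $\dv_\ell$ must be used in an essential way, possibly under a mild resolvability assumption on $N$, the spacing $d_{\rm c}$, and the Doppler range so as to rule out spurious maximizers. Everything preceding it--the closed-form SINR, the Parseval conservation law, and the Cauchy--Schwarz bound attained at $\theta_k$--is routine once the orthogonality of Lemma~1 is in hand.
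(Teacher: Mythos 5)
Your first three steps are correct and verifiable against the paper's notation: the SINR in \eqref{eq:SINR} is indeed independent of $g$ (the factor $1/|g|^2$ is common to the signal, interference, and noise powers), the set $\{\bv_{n'}(\theta_k)\}_{n'=1}^{N}$ is an orthogonal basis of $\CC^N$ with squared norm $N$ by Lemma~1, and your Parseval identity plus Cauchy--Schwarz correctly show that any maximizer $\theta^{\star}$ must satisfy $|u_{k,\ell,n}(\theta^{\star})|=N$ and $v_{k,\ell,n'}(\theta^{\star})=0$ for all $n'\neq n$, equivalently $\bv_n(\theta^{\star})=c\,\bv_n(\theta_k)$ with $|c|=1$. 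The gap is in the last step, and it is not a minor one: converting this alignment condition into the two asserted vector identities is precisely the content of Lemma~3 that goes beyond Lemma~2 (Lemma~2 already gives attainment of the maximum at $\theta_k$; the real claim here is that there are \emph{no spurious maximizers}). Your treatment of this part is a sketch hedged by ``I would argue,'' ``generically,'' and ``possibly under a mild resolvability assumption'': the injectivity of the sampled Dirichlet magnitude profile in the spatial frequency $\nu_\ell d_{\rm c}\sin\theta$ is asserted rather than proven, and a genericity qualifier cannot support a lemma stated for every $\theta_k$.

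Concretely, your componentwise-ratio argument breaks down when $\nu_\ell d_{\rm c}\sin\theta_k$ falls on the DFT grid: in that case $\Um_n^{\transp}\av_\ell(\theta_k)$ has exactly one nonzero entry, so there are no two (let alone consecutive) nonzero entries whose ratios pin down the Doppler tone, and the entire Doppler ramp is absorbed into the free phase $c$, leaving $\dv_\ell(\theta^{\star})$ unconstrained by your argument. (The conclusion can be rescued there --- single-entry support of the DFT forces $\av_\ell(\theta^{\star})=\av_\ell(\theta_k)$, and half-wavelength spacing then forces $\theta^{\star}=\theta_k$, after which $\dv_\ell(\theta^{\star})=\dv_\ell(\theta_k)$ is automatic --- but that case analysis is absent, and the off-grid identifiability claim also needs a real proof, e.g., via monotonicity of adjacent Dirichlet-sample ratios, together with an explicit no-aliasing condition such as $d_{\rm c}\le\lambda_\ell/2$.) For comparison, the paper does not prove this lemma inline at all; it defers entirely to \cite[Corollary~1]{Lee2025Blind}, so the missing identifiability argument is exactly what that external result is invoked to supply. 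As written, your proposal is a correct and clean reduction of Lemma~3 to the identifiability question, but the identifiability question itself --- the heart of the statement --- is left unresolved.
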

\begin{proof}
    The proof is provided in in \cite[Corollary 1]{Lee2025Blind}.
\end{proof} Leveraging Lemma 3, we now describe the estimation of the CSIR and Doppler shift using the two pilot symbols, $\sv_{\ell}({N-1})$ and $\sv_{\ell}(N)$. Through quantization, we respectively construct codebooks for the array response and Doppler shift vectors as $\Ac_{\ell}=\{\av_\ell(\Delta_q):q\in[Q]\}$, and $\Dc_{\ell}=\{\dv_\ell(\Delta_q):q\in[Q]\}$, where $Q$ is the quantization resolution and $\Delta_q = -{\pi}/{2} + {(q-1)\pi}/{Q}$ for $q\in[Q]$. From Lemma 2, and assuming that $\av_\ell(\Delta_q)$ and $\dv_\ell(\Delta_q)$ correspond to the true array response and the Doppler shift vectors, respectively (i.e., $\Delta_q  = \theta_k$), the combined signal in \eqref{eq:decoded} for $n=N-1$ can be expressed as:
\begin{align*}
    c_{k,\ell,N-1}(1,\Delta_q) \stackrel{(a)}{=}g_{k,\ell}\sv_{\ell}(N-1)\sqrt{Np_{n}}+\tilde{z}_{k,\ell,n}(1,\Delta_q).
\end{align*} 
From this, the complex channel gain scaled by $\sqrt{p_{n}}$ for the $q$-th AoD is estimated as:
\begin{equation}
    \hat{g}_{k,\ell,q}= \frac{c_{k,\ell,N-1}(1,\Delta_q)}{\sqrt{N}\sv_{\ell}(N-1)}.
\end{equation} It is important to note that this estimator performs well only when the assumption $\Delta_q  = \theta_k$ holds. For AoDs deviating from the true direction, the resulting SINR significantly degrades, which can be exploited to identify the desired quantized AoD, as discussed follows. 

Using the estimated complex channel gain for the $q$-th AoD, i.e., $\hat{g}_{k,\ell,q}$, the combined signal in \eqref{eq:decoded} for $n=N$ can be expressed as:
\begin{align}
    c_{k,\ell,N}(\hat{g}_{k,\ell,q},\Delta_q) \stackrel{(a)}{=} \sqrt{N}\sv_{\ell}(N)+\tilde{z}_{k,\ell,n}(\hat{g}_{k,\ell,q},\Delta_q).\label{eq:ICLASS}
\end{align} where (a) follows from Lemma 2 under the assumptions that $\Delta_q = \theta_k$ and the complex channel gain is perfectly estimated. From \eqref{eq:SINR} and \eqref{eq:ICLASS}, the user computes the SINR as:
\begin{equation*}
    \hat{\gamma}_{k,\ell,N}(\hat{g}_{k,\ell,q},\Delta_q) = \frac{|\sqrt{N}\sv_{\ell}(N)|^2}{|c_{k,\ell,N}(\hat{g}_{k,\ell,q},\Delta_q) - \sqrt{N}\sv_{\ell}(N)|^2},\label{eq:estSINR}
\end{equation*} and then computes the estimated mean spectral efficiency as $\hat{\Rc}_{k,q} = (1/L)\sum_{\ell=1}^{L}\log_2\left(1+\hat{\gamma}_{k,\ell,N}(\hat{g}_{k,\ell,q},\Delta_q)\right)$. Finally, the AoD index that maximizes the mean spectral efficiency is selected as:
\begin{equation}
    q_k^{\star} = \argmax_{q\in[Q]}\;\hat{\Rc}_{k,q},
\end{equation} and the corresponding estimated complex channel gain and AoD are denoted by $\hat{g}_{k,\ell,q_k^{\star}}$ and $\Delta_{q_k^{\star}}$, respectively. 

To analyze the computational complexity of the proposed method, we evaluate the number of complex multiplications required by the proposed linear combiner for CSIR and Doppler shift estimation, where the total number of complex multiplications required to operate the proposed method is computed as $(2L+LQ)N^2 + LQN$.

\section{Simulation Results}

We consider a mmWave MU-MISO system in a high-mobility scenario with parameters: $f_{\rm c} = 28$ GHz, $B = 6.25$ MHz, $L =\num{64}$, and $L_{\rm CP} = 16$. To satisfy the latency constraint, i.e., $NT_{\rm b}\leq T_{\rm c}$, we set $N = 10$, $K = 8$, and $v_k = 39$ m/s, $\forall k\in[K]$. To evaluate the effect of the Doppler shift estimation, we simulate the worst-case scenario in which the Doppler shift is maximized, i.e., $\phi_k = \theta_k$, $\forall k\in[K]$. The cell coverage is $200$ m, the noise power is $\sigma^2 = -30$ dBm, and the quantization resolution is $Q = 256$. Monte Carlo simulations with $10^3$ trials are performed, where in each trial the distances between the BS and users, $\{r_k:k\in[K]\}$, are uniformly and randomly chosen from $[100, 200]$ m. An equal power allocation is assumed, i.e., $p_1= p_2=\dots=p_N$, to evaluate the average performance across users at different distances. We compare the proposed method with the following schemes: i) the performance limit of the proposed method under perfect CSIR and Doppler shift, ii) R-CIRCLE \cite{Lee2025Blind}, iii) zero-forcing (ZF), and iv) maximum ratio transmission (MRT). For the CSIT-based methods (ZF and MRT), perfect CSIT at the BS is assumed, with power allocation identical to that of an OFDM data block in the proposed CSIT-free DL transmission.

\begin{figure}[t]
\centering
\includegraphics[width=0.95\linewidth]{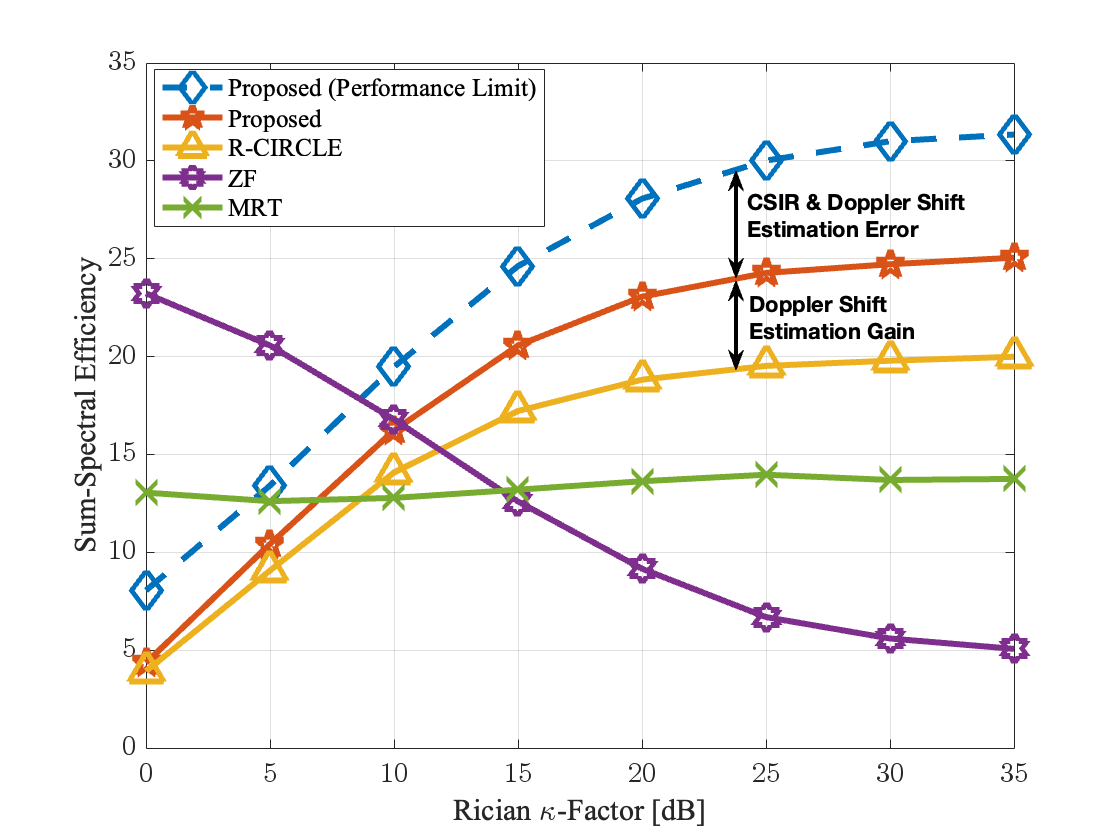}
\caption{Sum-spectral efficiency on the Rician $\kappa$-factor. $p_n=25$ dBm, $\forall n\in[N]$.}
\end{figure}


Fig. 2 illustrates the impact of the Rician $\kappa$-factor on the sum-spectral efficiency. The proposed scheme improves consistently as $\kappa$ increases and the LoS component dominates. While performance degrades slightly at low $\kappa$ due to stronger NLoS contributions, in the practical mmWave regime (i.e., $\kappa \gg 1$) the proposed framework achieves near-optimal performance by fully exploiting the orthogonality of CP-DFT precoding to guarantee interference cancellation and full combining gain. Compared with R-CIRCLE \cite{Lee2025Blind}, the proposed method yields a clear performance advantage as it encompasses Doppler shift estimation, thereby guaranteeing robust symbol detection even under high mobility. Furthermore, the proposed scheme avoids the need for CSIT acquisition. Unlike CSIT-based methods, which require frequent feedback and channel updates within the short coherence time, the proposed approach performs joint channel and Doppler estimation only after receiving all signals within the coherence time. This design makes it particularly effective in ultra-fast mobility scenarios, offering superior reliability and efficiency without incurring excessive overhead.





\section{Conclusion}
In this paper, we proposed a novel CSIT-free DL transmission framework for mmWave MU-MISO systems in high-mobility scenarios. By exploiting the orthogonality of CP-DFT precoding, the proposed method ensures complete interference cancellation and full combining gain, while requiring only two pilot symbols for joint CSIR and Doppler estimation. Simulation results demonstrated that the proposed scheme achieves near-optimal performance in practical mmWave channels, outperforming both CSIT-based and CSIT-free baselines under short coherence times. These results confirm the effectiveness and practicality of the proposed approach for high-mobility mmWave networks.


\begin{thebibliography}{1}

\providecommand{\url}[1]{#1}
\csname url@samestyle\endcsname
\providecommand{\newblock}{\relax}
\providecommand{\bibinfo}[2]{#2}
\providecommand{\BIBentrySTDinterwordspacing}{\spaceskip=0pt\relax}
\providecommand{\BIBentryALTinterwordstretchfactor}{4}
\providecommand{\BIBentryALTinterwordspacing}{\spaceskip=\fontdimen2\font plus
\BIBentryALTinterwordstretchfactor\fontdimen3\font minus \fontdimen4\font\relax}
\providecommand{\BIBforeignlanguage}[2]{{%
\expandafter\ifx\csname l@#1\endcsname\relax
\typeout{** WARNING: IEEEtran.bst: No hyphenation pattern has been}%
\typeout{** loaded for the language `#1'. Using the pattern for}%
\typeout{** the default language instead.}%
\else
\language=\csname l@#1\endcsname
\fi
#2}}
\providecommand{\BIBdecl}{\relax}
\BIBdecl

\bibitem{Alamouti1998}
S.~Alamouti, ``A simple transmit diversity technique for wireless communications,'' \emph{IEEE Journal on Selected Areas in Communications}, vol.~16, no.~8, pp. 1451--1458, Oct. 1998.

\bibitem{Tarokh1999}
V.~Tarokh, H.~Jafarkhani, and A.~Calderbank, ``Space-time block coding for wireless communications: performance results,'' \emph{IEEE Journal on Selected Areas in Communications}, vol.~17, no.~3, pp. 451--460, Mar. 1999.

\bibitem{Kim2024}
J.~Kim, J.~Choi, J.~Park, A.~Alkhateeb, and N.~Lee, ``Fdd massive mimo: How to optimally combine ul pilot and limited dl csi feedback?'' \emph{IEEE Transactions on Wireless Communications}, vol.~24, no.~2, pp. 926--939, Feb. 2025.

\bibitem{Arti2020}
M.~Arti, ``A novel downlink interference alignment method for multi-user mimo system with no csit: A space-time coding approach,'' \emph{IEEE Transactions on Vehicular Technology}, vol.~69, no.~10, pp. 10\,937--10\,948, Oct. 2020.

\bibitem{Priebe2013}
S.~Priebe and T.~Kurner, ``Stochastic modeling of thz indoor radio channels,'' \emph{IEEE Transactions on Wireless Communications}, vol.~12, no.~9, pp. 4445--4455, Sep. 2013.

\bibitem{Wang2025}
Z.~Wang, X.~Mu, and Y.~Liu, ``Beamfocusing optimization for near-field wideband multi-user communications,'' \emph{IEEE Transactions on Communications}, vol.~73, no.~1, pp. 555--572, 2025.

\bibitem{Liu2021}
G.~Liu, A.~Liu, R.~Zhang, and M.~Zhao, ``Angular-domain selective channel tracking and doppler compensation for high-mobility mmwave massive mimo,'' \emph{IEEE Transactions on Wireless Communications}, vol.~20, no.~5, pp. 2902--2916, 2021.

\bibitem{Lee2025Blind}
J.~Lee and S.~Hong, ``Blind massive mimo for dense iot networks,'' \emph{IEEE Internet of Things Journal}, vol.~12, no.~17, pp. 35\,678--35\,691, 2025.

\bibitem{Wang2018}
B.~Wang, F.~Gao, S.~Jin, H.~Lin, and G.~Y. Li, ``Spatial- and frequency-wideband effects in millimeter-wave massive mimo systems,'' \emph{IEEE Transactions on Signal Processing}, vol.~66, no.~13, pp. 3393--3406, 2018.

\end{thebibliography}


\end{document}